\documentclass[smallcondensed]{svjour3}
\usepackage[utf8]{inputenc}
\usepackage{mathptmx}  
\usepackage{latexsym}
\usepackage{natbib} 
 \usepackage{amsmath} 
 \usepackage{amsfonts}
 \usepackage{amssymb} 
\usepackage{graphicx}

 \usepackage{hyperref}   

\newtheorem{Theorem}{Theorem}[section]
\newtheorem{Proposition}[Theorem]{Proposition}




\newcommand{\CH}{\mathbb{C}_h^{n\times n}} 
\newcommand{\CM}{\mathbb{C}^{n\times n}}

\newcommand{\reals}{\mathbb{R}}

\newcommand{\gambles}{\mathcal{H}}

\newcommand{\domain}{\mathcal{K}}

\renewcommand{\epsilon}{\varepsilon}

\begin{document}


\title{A Gleason-type theorem for any dimension based on a gambling formulation of Quantum Mechanics}

\author{Alessio Benavoli\\ Alessandro Facchini\\ Marco Zaffalon}

\institute{Istituto Dalle Molle di Studi sull'Intelligenza Artificiale (IDSIA) Lugano (Switzerland).
\email{$\{$alessio,alessandro.facchini,zaffalon$\}$@idsia.ch}}

\date{Received: date / Accepted: date}

\maketitle

\begin{abstract}
Based on a gambling formulation of quantum mechanics, 
we derive a Gleason-type theorem that holds for any dimension $n$ of a quantum system, and in particular for $n=2$. The theorem states that the only logically consistent probability assignments are exactly the ones that are definable as the trace of the product of a projector and a density matrix operator. In addition, we detail the reason why dispersion-free probabilities are actually not valid, or rational, probabilities for quantum mechanics, and hence should be excluded from consideration.
\end{abstract}



%
%

\section{Introduction}
Born's rule is a law of  Quantum Mechanics (QM) that gives the probability that a measurement on a quantum system will yield a given result.
This rule was initially considered  a postulate of QM, but then it was investigated whether it could actually be derived as a theorem from other postulates of QM. Given reasonable assumptions, \citet{gleason1957measures} showed that probabilities in QM are expressible as the trace of the product of a projector and
a density operator.  Gleason's theorem is nowadays one of the fundamental theorems in QM.

The issue with Gleason's proof is that it is long and complicated; and it only  applies to quantum systems of dimension greater than two. 
For $n=2$, Gleason's theorem leaves room for the existence of probabilities that are not expressible as the trace of the product of a projector and
a density operator. An example of these probabilities are the so-called \textit{dispersion-free probabilities} \cite{bell1966problem} \cite[Sec. 6]{kochen1967problem}; these probabilities only take
values $0$ or $1$. 
An open question is  whether they are \textit{valid} probabilities, or more generally whether there is any valid probability that is not expressible as the trace of the product of a projector and a density operator. 

\subsection{Digression about classical probability}
\label{sec:introprob}
In classical subjective, or Bayesian, probability, there is a well-established way to check whether  the probability assignments of a certain subject, whom we call Alice, about the result of an uncertain experiment is valid, in the sense that they are self-consistent---we call them \textit{coherent}. The idea is to use these probability assignments to define odds, which are the inverses of probabilities, about the  results of the experiment (e.g., Head or Tail in the case of a coin toss) and show that there is no gamble in this betting system that leads Alice to a sure loss, that is, to losing money no matter the outcome of the experiment. Historically this is also referred to as the impossibility to make a Dutch book; and Alice is regarded as coherent, or rational, if she chooses her odds so that no Dutch book is possible.
\Citet{finetti1937} showed that Kolmogorov's probability axioms can be derived by imposing the principle of coherence alone on a subject's odds about an uncertain experiment. 

\citet{williams1975} and \citet{walley1991} have later shown that it is possible to follow de Finetti's tradition and ideas to justify probability while making things even simpler and more elegant. Their approach is also more general than de Finetti's, because coherence is defined without any explicit reference to probability (which is also what allows coherence to be generalised to other domains, such as quantum mechanics); the idea is to work in a dual space of gambles. To understand this framework, we consider an experiment whose outcome  $\omega$ belongs to a certain  space of possibilities $\Omega$ (e.g., Head or Tail). We can model Alice's beliefs about $\omega$ by asking her whether she accepts engaging
in certain risky transactions, called \textit{gambles}, whose outcome depends on the actual
outcome of the experiment. Mathematically, a gamble is a bounded real-valued function on $\Omega$, $g:\Omega 
\rightarrow \mathbb{R}$, which is interpreted as an uncertain reward in a linear utility scale. If Alice accepts a gamble $g$, this means that she commits herself to 
receive $g(\omega)$ \emph{utiles}\footnote{Abstract units of utility, indicating the satisfaction derived from an economic transaction.} if the outcome of the experiment eventually happens 
to be the event $\omega \in \Omega$. Since $g(\omega)$ can be negative, Alice can also lose utiles. Therefore Alice's acceptability of a gamble depends on her knowledge about the experiment.

The  set of gambles that Alice accepts is called her set of \emph{desirable} (or \emph{acceptable}) \emph{gambles}. 
One such set is said to be \emph{coherent} when it satisfies the following criteria:\footnote{For an example see \citet[Example 1]{benavoli2016d}.}
\begin{enumerate}
 \item Any gamble $g\neq0 $ such that $g(\omega)\geq0$ for each $\omega \in \Omega$ must be desirable for Alice, given 
that it may increase Alice's capital without ever decreasing it
 (\textbf{accepting partial gain)}. 
 \item Any gamble $g$ such that $g(\omega)\leq0$ for each $\omega \in \Omega$ must not be desirable for Alice, given 
that it may only decrease Alice's capital without ever increasing it  (\textbf{avoiding partial loss}). 
 \item If Alice finds $g$ to be desirable, 
 then also $\lambda g$ must be desirable for her for any $0<\lambda \in \mathbb{R}$ (\textbf{positive homogeneity}).
\item If Alice finds $g_1$ and $g_2$ desirable, 
then she also must accept $g_1+g_2$ 
(\textbf{additivity}). 
\end{enumerate}

\noindent Note how these four axioms express some truly minimal requirements: the first means that Alice likes to increase her wealth; the second that she does not like to decrease it; the third and fourth together simply rephrase the assumption that Alice's utility scale is linear. 

In spite of the simple character of these requirements, these four axioms alone define a very general theory of probability. De Finetti's (Bayesian) theory is the particular case obtained by additionally imposing some regularity (continuity) requirement and especially completeness, that is, the idea that a subject should always be capable of comparing options \cite{williams1975,walley1991}.\footnote{By enforcing those requirements, partial and sure (Dutch book) losses become equivalent.} On the other hand, \citet{zaffalon2015a} have shown that Axioms~1--4 above are equivalent, under the assumption of linearity of utility, to the decision-theoretic axiomatisation of incomplete preferences in the classical tradition of \citet{anscombe1963} (see, e.g., \citet{galaabaatar2013}). 

This is to say that Axioms~1--4 have a very long history as a solid axiomatic foundation of rationality. And it is precisely their emphasis on rationality that allows us to connect them to quantum mechanics, as we describe next.

\subsection{Back to quantum mechanics}
In \citet{benavoli2016d}, we have generalised Williams-Walley's gambling system to QM and shown that, by imposing the same rationality criteria, 
it is possible to derive all the four postulates of QM as a consequence of rational gambling on a quantum experiment.
This is tantamount to showing that QM is the Bayesian theory generalised to the space of Hermitian matrices. 
It has ``probabilities'' (density matrices), Bayes' rule (measurement), marginalisation (partial tracing), independence (tensor product). 
In this framework, QM probability assignments computed via Born's rule are derived quantities, and they are proved to represent \textit{
valid (coherent/self-consistent/rational)}  probability assignments.

The present paper uses these results to: 
\begin{enumerate}
 \item show that  dispersion-free probabilities are \textit{incoherent} and
  \item derive a stronger version of Gleason's theorem that holds in any dimension, through much a simpler proof, which states that: all \textit{coherent} probability assignments in QM  must be obtained as the trace of the product of a projector and a density operator.
\end{enumerate}

\subsection{Related work}
In recent times, several attempts have been made to find a Gleason-type theorem whose applicability covers the two-dimensional case too. 
A notable generalisation of Gleason's theorem holding in any finitely dimensional Hilbert space  has been obtained by relaxing the orthogonality constraint on the measurement operators. More specifically, observables are identified with positive operator valued measurements (POVMs)  \cite{holevo1982probabilistic}, which are defined by any partition of the identity operator into a set of $\ell$ positive operators $E_i$---called \emph{effects}---, acting on an $n$-dimensional Hilbert space and representing the outcomes of the measurement.  A generalized probability measure is then defined on the set of all effects, that is, the positive operators that can occur in the range of a POVM. All such generalized probability measures are then proven to be of the standard form, i.e., determined by a density operator ($p(E_i) = \text{trace}(\rho E_i)$). The one-to-one relationship between generalized probability measures on the effects and density operators  was first derived
by  \citet{holevo1973statistical} (see also \cite[Sec. 1.6.1.]{holevo1982probabilistic}). More recently, \citet{busch2003quantum}  has re-derived it with the goal of providing a simplified proof of Gleason’s theorem that also covers the two-dimensional case. 
The same approach has been  pursued by \citet{caves2004gleason}, and a further generalisation is provided by   \citet{barnett2014quantum}. 
 
However, probability measures defined on effects are peculiar. POVMs can in fact be regarded as
imperfect observations, since they are not repeatable. 


On another side, the idea of justifying  QM from rationality principles on a gambling system was proposed in the Bayesian interpretation  of QM (QBism), see for instance \citet{Caves02,longPaper, FuchsSchackII,mermin2014physics} and  Pitowsky's quantum gambles \cite{pitowsky2003betting,Pitowsky2006}. These attempts have mostly focused on the probabilities that can be derived from QM; we argue instead that QM itself is the Bayesian theory of probability---extended so as to enable it to deal with quantum experiments. For a detailed discussion about similarities and differences between our framework and QBism we refer to \citet[Section 8]{benavoli2016d}.

%
%
%
%
%
%
%
%
%
%
%

\section{Quantum gambling system}
We start by defining a gambling system about the results of a quantum experiment. 
To this end, we consider two subjects: the bookmaker and the gambler (Alice).  The bookmaker prepares the quantum system in some quantum state. Alice
has her personal knowledge (beliefs) about the experiment---possibly no knowledge at all.

\begin{enumerate}
\item   {
The bookmaker 
 announces that he will measure the quantum system along  its $n$ orthogonal directions and so the outcome of the measurement is an element of $\Omega=\{\omega_1,\dots,\omega_n\}$,  with $\omega_i$ denoting the elementary event ``detection along $i$''. 
Mathematically,  it means that the quantum system is measured along its eigenvectors,\footnote{We mean the eigenvectors of the density matrix of the quantum 
system.} i.e., the projectors\footnote{A projector $\Pi$ is a set of $n$ positive semi-definite matrices in $\CH$ such that   $\Pi_i\Pi_k=0$, $(\Pi_i)^2=\Pi_i=(\Pi_i)^\dagger$,  $\sum_{i=1}^n \Pi_i=I$.} $\Pi^*=\{\Pi^*_{1},\dots,\Pi^*_{n}\}$
and $\omega_i$ is the event ``indicated'' by the $i$-th projector. The bookmaker is fair, meaning that he will correctly perform the experiment and report
 the actual results to Alice.} 
  \item {Before the experiment, Alice declares the set of gambles she is willing to accept.  Mathematically, a gamble $G$ on this experiment 
is a Hermitian matrix in $\CM$, the space of all Hermitian  $n \times n$ matrices being denoted by $\CH$.  We will denote the set of gambles Alice is willing to accept by $\domain \subseteq \CH$.}
\item By accepting  a gamble $G$, Alice commits herself to receive  $\gamma_{i}\in \reals$ utiles  if the outcome of the experiment eventually happens to be 
$\omega_i$. The value $\gamma_{i}$ is defined from $G$ and $\Pi^{*}$ as follows:
 \begin{equation}
  \Pi^{*}_{i}G\Pi^{*}_{i}=\gamma_{i}\Pi^{*}_{i} \text{ for } i=1,\dots,n.
 \end{equation} 
It is a real number since $G$ is Hermitian.
\end{enumerate}
Denote by $\gambles^+=\{G\in\CH:G\gneq0\}$ the
subset of all positive semi-definite and non-zero (PSDNZ) matrices  in $\CH$: we call them the set of \emph{positive gambles}.
The set of negative gambles is similarly given by $\gambles^-=\{G\in\CH:G\lneq0\}$.  Alice examines the gambles in $\CH$ and comes up with the subset $\domain$ of the gambles that she finds desirable. Alice's rationality is characterised as follows.
\begin{definition}[Rationality criteria]\label{def:axQM}
\begin{enumerate}
 \item Any gamble $G \in \CH$ such that $G \gneq0$ must be desirable for Alice, given that it may increase Alice's 
utiles without ever decreasing them
 (\textbf{accepting partial gain}). This means that $ \gambles^+ \subseteq \domain$.
\item Any gamble $G \in \CH$ such that $G \lneq0$ must not be desirable for Alice, given that it may only decrease 
Alice's utiles without ever increasing them  (\textbf{avoiding partial loss}). This means that $ \gambles^- \cap 
\domain=\emptyset$.
\item If Alice finds $G$ desirable, that is
$G \in \domain$, then also $\nu G$ must be desirable for her for any $0<\nu \in \reals$ (\textbf{positive homogeneity}).
\item If Alice finds $G_1$ and $G_2$ desirable, that is
$G_1,G_2 \in \domain$, then she must also accept $G_1+G_2$, i.e., $G_1+G_2 \in \domain$ (\textbf{additivity}). 
\end{enumerate}
\end{definition}
To understand these rationality criteria, originally presented in \citet[Sec. III]{benavoli2016d}, we must remember that mathematically the payoff for any gamble $G$
is computed as $\Pi_i^{*} G \Pi_i^{*}$ if the outcome of the experiment is the event indicated by $\Pi_i^{*}$.
Then the first two rationality criteria above hold no matter the experiment $\Pi^{*}$ that 
is eventually performed. 
In fact,  from the properties of PSDNZ matrices, if 
 $G \gneq0$ then  $\Pi_i^{*} G \Pi_i^{*}=\gamma_{i} \Pi_i^{*}$ with $\gamma^{*}_{i}\geq0$ for any $i$ and 
$\gamma_{j}>0$ for some $j$. Therefore, by accepting $G \gneq0$, Alice can only increase her utiles.
Symmetrically,  if $G \lneq0$ then $\Pi_i^{*} G \Pi_i^{*} = \gamma_{i} 
\Pi_i^{*}$ with $\gamma_{i}\leq 0$ for any $i$. 
Alice must then avoid the gambles $G \lneq0$ because they can only decrease her utiles.
This justifies  the first two rationality criteria. 
 For the last two, observe that 
 $$
 \Pi_i^{*} (G_1+G_2) \Pi_i^{*}=\Pi_i^{*} G_1 \Pi_i^{*}+\Pi_i^{*} G_2 \Pi_i^{*}=(\gamma_i+\vartheta_i) \Pi_i^{*},
 $$ 
 where we have  exploited the fact that $\Pi_i^{*} G_1 \Pi_i^{*}=\gamma_i  \Pi_i^{*}$ and $\Pi_i^{*} G_2 
\Pi_i^{*}=\vartheta_i \Pi_i^{*}$. Hence, if Alice accepts $G_1,G_2$, she must also accept $G_1+G_2$ because this 
will lead to a reward of $\gamma_i+\vartheta_i$.
 Similarly, if $G$ is desirable for Alice, then also $\Pi_i^{*} (\nu G) \Pi_i^{*}=
 \nu\Pi_i^{*}  G \Pi_i^{*}$ should be desirable for any $\nu>0$. 
 
In other words, as in the case of classical desirability described in Section \ref{sec:introprob}, the four conditions above state only minimal requirements: that Alice would like to increase her wealth and not decrease it (conditions $1$ and $2$); and that Alice's utility scale is linear (conditions $3$ and $4$). The first two conditions should be plainly uncontroversial. The linearity of the utility scale is routinely assumed in the theories of personal, and in particular Bayesian, probability as a way to isolate  considerations of uncertainty from those of value (wealth).

%
%

We can characterise $\domain$ also from a geometric point of view. In fact, from the above properties, it follows that Alice's set of desirable gambles mathematically satisfies the following properties.
 
\begin{definition}
  \label{def:sdg}
Let $\domain $ be a subset of $\CH$. We say that  $\domain$ is   a {\bf coherent set of strictly desirable gambles (SDG)} if
\begin{description}
 \item[(S1)] $\domain$ is a non-pointed convex-cone (positive homogeneity and additivity);
 \item[(S2)] if $G\gneq0$ then $G \in \domain$ (accepting partial gain);
   \item[(S3)] if $G \in \domain$ then either $G \gneq0$ or $G -\epsilon I \in  \domain$ for some strictly positive real number $\epsilon$ (openness).\footnote{In \citet{benavoli2016d} we used another formulation of openness, namely (S3'): if $G \in \domain$ then either $G \gneq0$ or $G -\Delta \in  \domain$ for some $0<\Delta \in \CH$. (S3) and (S3') are provably equivalent given (S1) and (S2).}
\end{description}
\end{definition}
\noindent Note that the although the additional openness property (S3) is not necessary for rationality, it is technically convenient as it precisely isolates the kind of models we use in QM (as well as in classical probability) \cite{benavoli2016d}.
Property (S3) has a  gambling interpretation too: it means that we will only consider gambles that are \emph{strictly} desirable for Alice; these are the gambles for which Alice expects gaining something---even an epsilon of utiles.
Note that assumptions (S1) and (S2)  imply that SDG also avoids partial loss: if $G \lneq 0$, then $G \notin \domain$
\cite[Remark III.2]{benavoli2016d}. 


\begin{definition}
An SDG is said to be maximal if there is no larger SDG containing it. 
\end{definition}

In \citet{benavoli2016d}, we have shown that maximal SDGs and density matrices are one-to-one.
The mapping between them is obtained through the standard inner product in $\CH$, i.e.,
$G\cdot R= Tr(G^\dagger R)$ with $G,R\in \CH$.
This follows by a  representation result whose proof is a direct application of Hahn-Banach theorem:

\begin{Theorem}[Representation theorem from \citet{benavoli2016d}]\label{thm:repr}
\label{thm:dualityopen} 
The map that associates a maximal SDG the unique density matrix $\rho$ such that 
\begin{equation}
Tr(G^\dagger \rho) \geq  0~ \forall G \in \domain\end{equation} 
defines a bijective correspondence between maximal SDGs and  density matrices. 
 Its inverse is the map $(\cdot)^\circ$ that associates each density matrix $\rho$ the  maximal SDG\footnote{Here the gambles  $G\gneq0 $ are treated separately because they are always desirable and, thus, they are not informative on Alice's beliefs
about the quantum system. Alice's knowledge is determined by the gambles that are not $G\gneq0 $.}
\begin{equation}
\label{eq:induced}
(\rho)^\circ=\{ G \in \CH \mid G  \gneq0\}\cup\{G \in \CH \mid  Tr(G^\dagger \rho) > 0\}. 
\end{equation}

\end{Theorem}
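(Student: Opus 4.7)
The plan is to prove both directions of the bijection using a Hahn--Banach style separation argument, exploiting that $\CH$ is a finite-dimensional real inner product space under $\langle G, R\rangle = Tr(G^\dagger R)$, so every continuous linear functional on $\CH$ has the form $G \mapsto Tr(G^\dagger \rho)$ for a unique Hermitian $\rho$.

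\emph{Forward direction (existence of $\rho$).} Given a maximal SDG $\domain$, observe that by (S1), (S2) and the induced avoidance of partial loss, $\domain$ is a convex cone containing $\gambles^+$ and disjoint from $\gambles^-$. A separation theorem in $\CH$ then yields a non-zero linear functional $L$ that is non-negative on $\domain$ and non-positive on $\gambles^-$. By Riesz representation in $\CH$, $L(G)=Tr(G^\dagger \rho)$ for some Hermitian $\rho$. The inclusion $\gambles^+\subseteq\domain$ forces $Tr(G^\dagger \rho)\geq 0$ for all PSD $G$, which is equivalent to $\rho\succeq 0$; since $\rho\neq 0$, we have $Tr(\rho)>0$, and we can rescale so that $Tr(\rho)=1$, giving a density matrix.

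\emph{Reverse direction ($(\rho)^\circ$ is a maximal SDG).} I would verify the three axioms for $(\rho)^\circ$: (S1) follows from linearity of $Tr$; (S2) holds by construction; for (S3), if $G\in(\rho)^\circ$ and $G\not\gneq 0$, then $Tr((G-\epsilon I)^\dagger \rho) = Tr(G^\dagger\rho) - \epsilon>0$ for small $\epsilon>0$, so $G-\epsilon I\in(\rho)^\circ$. To see that $(\rho)^\circ$ is maximal, any candidate $G_0\notin(\rho)^\circ$ satisfies $G_0\not\gneq 0$ and $Tr(G_0^\dagger\rho)\leq 0$; combining $G_0$ additively with elements of $(\rho)^\circ$ near the boundary would produce, by continuity, a gamble in $\gambles^-$, violating avoidance of partial loss.

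\emph{Bijectivity.} To close the loop on the forward direction, I would show that the $\rho$ produced from $\domain$ satisfies $(\rho)^\circ=\domain$: the inclusion $(\rho)^\circ\subseteq\domain$ follows from maximality of $\domain$ combined with (S2), and the reverse inclusion follows because for any $G\in\domain$ with $G\not\gneq 0$, openness (S3) gives some $G-\epsilon I\in\domain$, hence $Tr((G-\epsilon I)^\dagger \rho)\geq 0$, which upgrades to $Tr(G^\dagger\rho)>0$. Uniqueness of $\rho$ then follows because if two density matrices induced the same $(\rho)^\circ$, their difference $\rho-\rho'$ would vanish on a set that spans $\CH$, forcing $\rho=\rho'$.

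The main obstacle is the careful handling of the boundary gambles, namely the $G$ with $Tr(G^\dagger\rho)=0$ that are not in $\gambles^+$: these lie in the closure of $(\rho)^\circ$ but not in $(\rho)^\circ$ itself. It is precisely the openness axiom (S3) that rules these gambles out of maximal SDGs, and this is what permits the strict inequality in the definition of $(\rho)^\circ$ while still yielding a clean bijection rather than a mere surjection onto density matrices.
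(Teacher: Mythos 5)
The paper never actually proves this theorem: it is imported from \citet{benavoli2016d}, the only indication of method being the remark that it ``follows by a direct application of Hahn-Banach theorem,'' and your separation-plus-self-duality-of-the-PSD-cone argument is precisely that route and is essentially correct. The two spots you would need to tighten are (i) the maximality of $(\rho)^\circ$, where the clean argument is not ``by continuity'' but: given $G_0\notin(\rho)^\circ$ in a putative larger SDG, openness yields $G_0-\epsilon I$ in it with $Tr((G_0-\epsilon I)^\dagger\rho)<0$, and adding $-(G_0-\epsilon I)-\delta I\in(\rho)^\circ$ for small $\delta>0$ produces $-\delta I\lneq 0$, violating avoiding partial loss; and (ii) the order of the inclusions in the bijectivity step, since $(\rho)^\circ\subseteq\domain$ can be deduced from maximality of $\domain$ only after one has established both that $\domain\subseteq(\rho)^\circ$ and that $(\rho)^\circ$ is itself a coherent SDG.
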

This representation theorem has several consequences. First, it provides a gambling interpretation of the first axiom of QM on density operators. Second, it shows that density operators are coherent, since
the set $(\rho)^\circ$ that they induce in \eqref{eq:induced} is a valid SDG. This also implies that QM is self-consistent---a gambler that uses QM to place bets on a quantum experiment cannot be made a partial (and, thus, sure) loser.  
Third, the first axiom of QM on $\mathbb{C}_h^{n \times n}$ is structurally and formally equivalent to 
Kolmogorov's first and second axioms about probabilities on $\mathbb{R}^n$ \cite[Sec. 2]{benavoli2016d}. In fact, they can be both derived via duality
from a coherent set of desirable gambles on $\mathbb{C}_h^{n \times n}$ and, respectively, $\mathbb{R}^n$.
In \citet{benavoli2016d} we have also derived Born's rule and the other three axioms of QM as a consequence of rational gambling on a quantum experiment
and show that that measurement, partial tracing and tensor product are equivalent to the probabilistic notions of
Bayes' rule, marginalisation and independence.
We will now use these results to prove the two main results of the present paper.

%
%

\section{The incoherence of dispersion-free  probability measures}

Let  $\mathfrak{P}(\mathbb{C}_h^{n \times n})$ be the lattice of orthogonal projectors in $\mathbb{C}_h^{n \times n}$.
Gleason's theorem relies on the fact that a function $p: \mathfrak{P}(\mathbb{C}_h^{n \times n}) \to [0,1]$ to be called a probability measure has to satisfy the following properties:
\begin{description}
\item[(P1)] $p(I_n)=1$,  
\item[(P2)] $p(\sum^m_{i=1} \Pi_i)=\sum^m_{i=1} p(\Pi_i)$, for each sequence $(\Pi_1, \dots, \Pi_m)$ of mutually orthogonal projectors, and $m\leq n$.
\end{description}
Property (P2), usually called \emph{non-contextuality}, asserts that the probability measure associated to a Hilbert subspace is independent of the choice of the basis.  It implies that, given a Hermitian matrix $G \in \mathbb{C}_h^{n \times n}$, if  
$G=\sum_{i=1}^m \lambda_i \Pi_i$ and $G=\sum_{\ell=1}^k \gamma_\ell \Pi'_\ell$ are two different decompositions obtained from the Spectral Decomposition Theorem,\footnote{This happens when an eigevalue has multiplicity greater than one.} then 
$\sum_{i=1}^m \lambda_i p(\Pi_i) = \sum_{\ell=1}^k \gamma_\ell p(\Pi'_\ell)$.

The crucial point is whether Properties (P1) and (P2) are strong enough to characterise only valid probabilities---that those two conditions appear as a straightforward extension of the classical probability axioms to QM does not mean that they represent the most appropriate way to define probabilities in such a generalised setting.



Our standpoint, as it follows from de Finetti's investigation, is that the essence of probability is the idea of rationality (self-consistency). This is captured by the coherence axioms, which are more primitive than the probability axioms; in this sense, they are better candidates to extend the probability to QM.

To verify this idea, we first need to define what is a coherent probability. We start by defining the expectation associated to a probability measure $p$ as
\[
E_p(G)=\sum_{i=1}^n \lambda_i p(\Pi_i),
\]
where $ \sum_{i=1}^n \lambda_i \Pi_i$ is the decomposition of $G$.

The set of desirable gambles associated to $p$ is thus  defined as
\[
\mathcal{K}_p= \{G \in \mathbb{C}_h^{n \times n} \mid G \gneq0 \textit{ or }E_p(G)>0\},
\]
i.e.,  all $G \gneq0 $ (that are always desirable) and all $G$ whose expectation 
w.r.t.\ $p$ is greater than zero.
Therefore, we say that:
\begin{definition}
A probability measure $p: \mathfrak{P}(\mathbb{C}_h^{n \times n}) \to [0,1]$ is {\bf coherent} if $\mathcal{K}_p$ is a coherent set of strictly desirable gambles.
\end{definition}
\noindent Thus, from the Bayesian perspective adopted here, a probability measure $p$ satisfying (P1) and (P2) is coherent whenever the set of desirable gambles $\mathcal{K}_p$ implied by $p$ is a coherent SDG.

Consider the Hilbert space $\mathbb{C}_h^{2 \times 2}$, and in particular dispersion-free  probability measures, which are those that assign only the values 0 and 1. 

To define them, we exploit the fact that any projector can be written as
$$
\Pi_n=\frac{1}{2}(I+n\cdot \sigma),
$$
where $n\in \mathbb{R}^3$, $||n||=1$ and $\sigma$ is the Pauli matrices basis.
Its orthogonal vector is $\Pi_{-n}$, since
$\Pi_n+\Pi_{-n}=I$, and $\Pi_n\Pi_{-n}=0$.
Note that any $0\neq G \in \mathbb{C}_h^{2 \times 2}$ can be uniquely decomposed, i.e.,  there is a unique projector  $\Pi_n$ such that $G=\lambda_1 \Pi_n+\lambda_2 \Pi_{-n}$, for some $\lambda_1, \lambda_2 \in \reals$. 

 \begin{definition} A 2D dispersion-free probability measure is any function $p: \mathfrak{P}(\mathbb{C}_h^{2 \times 2}) \to [0,1]$
 defined as 
\begin{equation}
\label{eq:dispersionp}
p(\Pi_n)=\left\{\begin{array}{ll}
                 1 & \textit{if } n\cdot x>0 \textit{ or}\\
                  & \textit{if } n\cdot x=0, ~n\cdot y>0\textit{ or}\\
                   & \textit{if } n\cdot x=0, ~n\cdot y=0, ~n\cdot z>0,\\
                   0 & \textit{ otherwise,}
                \end{array}
\right.
\end{equation} 
for some orthogonal vectors $x,y,z\in\mathbb{R}^3$.
\end{definition}

We now show with an example that dispersion-free probabilities are incoherent. 

\begin{example}
Without loss of generality let us assume that $x=(1,0,0)$, $y=(0,1,0)$, $z=(0,0,1)$. 
Consider then the gamble 
$$
G=\lambda_1 \Pi_g+\lambda_2 \Pi_{-g}=\begin{bmatrix}
\begin{array}{cc}
 -\frac{1}{4} & \frac{5}{4} \left(1-i \sqrt{2}\right) \\
 \frac{5}{4} \left(1+i \sqrt{2}\right) & -\frac{11}{4} \\
\end{array}
\end{bmatrix},
$$
with $\lambda_1=1$, $\lambda_2=-4$ and $g=(1/2,1/\sqrt{2},1/2)$.
This gamble is clearly desirable, and thus in $\domain_p$, since $g\cdot x>0$ and $\lambda_1=1>0$ and so
 $E_p(G)=1$.

Consider now the gamble 
$$
H=\gamma_1 \Pi_h+\gamma_2 \Pi_{-h}=\begin{bmatrix}
\begin{array}{cc}
 -\frac{11}{4} & \frac{5}{4} \left(1+i \sqrt{2}\right) \\
 \frac{5}{4} \left(1-i \sqrt{2}\right) & -\frac{1}{4} \\
\end{array}
\end{bmatrix},
$$
with
 $\gamma_1=1$, $\gamma_2=-4$ and $h=(1/2,-1/\sqrt{2},-1/2)$.
 This gamble is also  desirable since $h\cdot x>0$ and $\gamma_1=1>0$ and so
 $E_p(H)=1$, meaning $H \in \domain_p$.
 
Let $F=G+H$. 
 Notice that
$$
F=\displaystyle{\begin{bmatrix}
   -3 & \frac{5}{2}\vspace{1mm}\\ \frac{5}{2} & -3\\
   \end{bmatrix}},
$$
which we can rewrite as $\varrho_1 \Pi_f+\varrho_2 \Pi_{-f}$
with $\varrho_1=-\frac{1}{2}$, $\varrho_2=-\frac{11}{2}$
and $f=(1, 0, 0)$ and so $F<0$. 
Since $f\cdot x>0$, we have that $E[F]=-\frac{1}{2}<0$. 
As a result Alice is incoherent. In fact by accepting $G$ and $H$, she should also be willing to accept $F$ by additivity (Axiom~3 in Definition~\ref{def:axQM}). But this means that Alice would incur a sure loss (a Dutch book), which is a strong form of irrationality. $\lozenge$
\end{example}
Stated differently, the example shows that dispersion-free probabilities are logically inconsistent with the axioms of QM and therefore should not be regarded as, nor called, probabilities. In the next section we detail the formal argument.



\section{A subjective extension of Gleason's theorem}
We will now show that the only coherent probabilities are those that can be defined as the trace of the product of a projector and a density operator (it is well known that dispersion-free probabilities cannot be defined in this way; see, e.g., \citet{heinosaari2011mathematical}).

We first characterise sets of gambles defined by coherent probability measures.
\begin{Proposition}\label{prop:coherent}
If $p: \mathfrak{P}(\mathbb{C}_h^{n \times n}) \to [0,1]$ is a coherent probability measure, then $\mathcal{K}_p$ is a maximal SDG.
\end{Proposition}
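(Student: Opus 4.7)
The plan is a proof by contradiction. Suppose $\mathcal{K}_p$ is not maximal, so there exists a coherent SDG $\mathcal{K}'$ that strictly contains $\mathcal{K}_p$; pick any $H \in \mathcal{K}' \setminus \mathcal{K}_p$. By the very definition of $\mathcal{K}_p$, the fact that $H \notin \mathcal{K}_p$ simultaneously forces $H \notin \gambles^+$ and $E_p(H) \leq 0$. The idea is then to split on the sign of $E_p(H)$ and to show in each sub-case that the presence of $H$ in $\mathcal{K}'$ implies $0 \in \mathcal{K}'$, which contradicts the non-pointedness of $\mathcal{K}'$ demanded by (S1).

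Consider first the easy sub-case $E_p(H) < 0$. Writing a spectral decomposition $H = \sum_i \lambda_i \Pi_i$, the gamble $-H = \sum_i (-\lambda_i)\Pi_i$ satisfies $E_p(-H) = -E_p(H) > 0$, so $-H$ lies in $\mathcal{K}_p$ and hence in $\mathcal{K}'$. Additivity in (S1) then yields $H + (-H) = 0 \in \mathcal{K}'$, the desired contradiction.

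The delicate sub-case is $E_p(H) = 0$ with $H \notin \gambles^+$. Here I invoke openness (S3): from $H \in \mathcal{K}'$ and $H \notin \gambles^+$ there exists $\epsilon > 0$ with $H - \epsilon I \in \mathcal{K}'$. Since $H$ and $I$ are simultaneously diagonalized by the spectral projectors of $H$, one has $H - \epsilon I = \sum_i (\lambda_i - \epsilon)\Pi_i$, and combining (P1) with (P2)
\[
E_p(H - \epsilon I) = \sum_i (\lambda_i - \epsilon)\, p(\Pi_i) = E_p(H) - \epsilon\, p(I) = -\epsilon < 0.
\]
Thus $H - \epsilon I$ sits in $\mathcal{K}'$ with strictly negative expectation, reducing the situation to the easy sub-case applied to $H - \epsilon I$ and producing $0 \in \mathcal{K}'$ once more, again a contradiction.

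The main obstacle, as the sketch shows, is precisely the degenerate boundary case $E_p(H) = 0$: without openness (S3) one could not perturb $H$ into a gamble of strictly negative expectation, and $\mathcal{K}_p$ would admit proper coherent extensions sitting on the zero-expectation hyperplane. It is also worth stressing that the argument nowhere invokes the full linearity of $E_p$ over non-commuting Hermitian matrices—that would essentially be the content of the next theorem—but only linearity along the $I$-direction, which follows at once from (P1) and (P2) applied to the commuting spectral family $\{\Pi_i\}_i$.
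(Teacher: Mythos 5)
Your proof is correct and takes essentially the same route as the paper's: both use openness (S3) of the putative larger SDG $\mathcal{K}'$ to perturb a gamble $H\in\mathcal{K}'\setminus\mathcal{K}_p$ into one with strictly negative $p$-expectation, observe that its negation then lies in $\mathcal{K}_p\subseteq\mathcal{K}'$, and invoke additivity to force $0\in\mathcal{K}'$, contradicting non-pointedness. Your preliminary case split on the sign of $E_p(H)$ is a cosmetic variation (the paper subtracts $\epsilon I$ in both cases at once), and your omission of the openness check for $\mathcal{K}_p$ itself is harmless since that is already part of the hypothesis that $p$ is coherent.
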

\begin{proof}
Since $p$ is coherent, we know that $\domain_p$ is a coherent set of desirable gambles. It is then enough to check that it satisfies the openness condition (S3) and maximality. For openness, assume $G=\sum_i^n \lambda_i \Pi_i \in \domain_p$, meaning $E_p(G) > 0$. Consider any $\epsilon > 0$ such that $E_p(G) -\epsilon > 0$
and let $F=G- \epsilon I=\sum_i^n (\lambda_i -\epsilon) \Pi_i$.
Then $E_p(F)=\sum_i^n (\lambda_i -\epsilon) p(\Pi_i)=E_p(G) - \epsilon(\sum_i^n p(\Pi_i))= E_p(G) - \epsilon p(I_2)=E_p(G) - \epsilon > 0$. Hence $F \in \domain_p$. 

For maximality, we reason as follows. Assume once more $G=\sum_i^n \lambda_i \Pi_i \notin \domain_p$, meaning $E_p(G) \leq 0$. Assume moreover that there is an SDG $\domain' \supset \domain_p$ such that $G \in \domain' $. By openness, $F=G -\epsilon I \in  \domain'$ for some $\epsilon > 0$. This means that $E_p(F)<0$, and thus $- F\in K_p$. 
Since $\domain'$ is an SDG, then $-F+F =0 \in \domain'$, a contradiction. 
\end{proof}

Everything is now ready to obtain the following subjective extension of Gleason's theorem: 

\begin{Theorem}\label{cor:gleason}
For every $n>0$, a probability measure $p: \mathfrak{P}(\mathbb{C}_h^{n \times n}) \to [0,1]$ is coherent if and only if it is of the form \[ p(\Pi_\ell)= \text{Tr}(\Pi_\ell \rho ), \]
for a unique density matrix $\rho \in \mathbb{C}_h^{n \times n} $.  
\end{Theorem}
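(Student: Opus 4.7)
The plan is to reduce the theorem to the representation theorem (Theorem~\ref{thm:repr}) and to Proposition~\ref{prop:coherent}, with the coherent SDG $\mathcal{K}_p$ serving as the bridge between the probability measure $p$ and the density matrix $\rho$. The key observation is that, for any Hermitian $G=\sum_i\lambda_i\Pi_i$ and any density matrix $\rho$, linearity of the trace and property (P2) give
\[
\text{Tr}(G\rho)=\sum_i\lambda_i\text{Tr}(\Pi_i\rho)\quad\text{and}\quad E_p(G)=\sum_i\lambda_i p(\Pi_i).
\]
So as soon as $p(\Pi)=\text{Tr}(\Pi\rho)$ agrees on all projectors, it agrees as an expectation on all gambles.

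For the \emph{if} direction, suppose $p(\Pi_\ell)=\text{Tr}(\Pi_\ell\rho)$ for some density $\rho$. The identity above yields $E_p(G)=\text{Tr}(G\rho)$ for every $G\in\CH$, and hence $\mathcal{K}_p=(\rho)^\circ$ by the definition of $(\rho)^\circ$ in \eqref{eq:induced}. The representation theorem then tells us that $\mathcal{K}_p$ is a (maximal) SDG, so $p$ is coherent. One also checks that (P1) and (P2) are automatically satisfied: $\text{Tr}(I\rho)=1$ and linearity of trace gives additivity on orthogonal projectors.

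For the \emph{only if} direction, assume $p$ is coherent. Proposition~\ref{prop:coherent} gives that $\mathcal{K}_p$ is a maximal SDG, and the representation theorem yields a unique density matrix $\rho$ with $\mathcal{K}_p=(\rho)^\circ$. It remains to show $p(\Pi)=\text{Tr}(\Pi\rho)$ for every projector $\Pi$. The trick is to probe $\mathcal{K}_p=(\rho)^\circ$ with the one-parameter family of gambles $G_\alpha:=\Pi-\alpha I$ for $\alpha>0$. Since the eigenvalues of $\Pi$ are $0$ and $1$, we have $G_\alpha\not\gneq 0$, so membership in $\mathcal{K}_p$ (resp.\ in $(\rho)^\circ$) reduces to a strict inequality on the expectation: using (P1),
\[
G_\alpha\in\mathcal{K}_p\iff E_p(G_\alpha)=p(\Pi)-\alpha>0,\qquad G_\alpha\in(\rho)^\circ\iff \text{Tr}(\Pi\rho)-\alpha>0.
\]
The equality $\mathcal{K}_p=(\rho)^\circ$ then forces $p(\Pi)>\alpha\iff\text{Tr}(\Pi\rho)>\alpha$ for every $\alpha>0$, from which $p(\Pi)=\text{Tr}(\Pi\rho)$. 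Uniqueness of $\rho$ is inherited from the representation theorem.

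Because the two heavy tools (Theorem~\ref{thm:repr} and Proposition~\ref{prop:coherent}) are already in place, there is no real obstacle; the only delicate step is the probing argument with $G_\alpha=\Pi-\alpha I$, where one must be careful that $G_\alpha\not\gneq 0$ (guaranteeing that membership in $\mathcal{K}_p$ and in $(\rho)^\circ$ is indeed governed by the expectation and not by the trivial positivity clause), and then invoke (P1) to rewrite $E_p(\alpha I)$ as $\alpha$.
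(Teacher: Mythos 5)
Your proposal is correct, and its overall architecture matches the paper's: both directions are routed through Proposition~\ref{prop:coherent} and Theorem~\ref{thm:repr}, with the identity $\mathcal{K}_p=(\rho)^\circ$ as the pivot. Where you genuinely diverge is in extracting $p(\Pi)=\mathrm{Tr}(\Pi\rho)$ from that set equality. The paper argues by contradiction: assuming $\mathrm{Tr}(\Pi_\ell\rho)>p(\Pi_\ell)$, it hand-builds a gamble $G=\lambda_1\Pi_\ell+\lambda_2(I_n-\Pi_\ell)$ with coefficients tuned to the values of $p(\Pi_\ell)$ and $\mathrm{Tr}((I_n-\Pi_\ell)\rho)$, which forces a case split on whether $p(\Pi_\ell)=0$ and on whether $\mathrm{Tr}((I_n-\Pi_\ell)\rho)=0$. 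Your one-parameter family $G_\alpha=\Pi-\alpha I$ replaces all of that with a single uniform argument: both $p(\Pi)$ and $\mathrm{Tr}(\Pi\rho)$ are recovered as $\sup\{\alpha>0: G_\alpha\in\mathcal{K}_p\}$, so equality of the sets forces equality of the numbers. This is cleaner and avoids the case analysis entirely; the only care needed (which you correctly flag) is that $G_\alpha\not\gneq 0$, which holds for every $\alpha>0$ whenever $0\lneq\Pi\lneq I_n$ — the degenerate projectors $\Pi=I_n$ and $\Pi=0$ are handled directly by (P1) and (P2), a triviality worth one sentence. Your \emph{if} direction is also arguably more honest than the paper's one-line citation, since Proposition~\ref{prop:coherent} as stated runs in the wrong direction for that implication; your explicit verification that $E_p(\cdot)=\mathrm{Tr}(\cdot\,\rho)$ gives $\mathcal{K}_p=(\rho)^\circ$, which is an SDG by Theorem~\ref{thm:repr}, is what is actually needed.
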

\begin{proof}
The fact that $p(\Pi_\ell)$ defined from $\rho$ via $\text{Tr}(\Pi_\ell \rho )$ is a coherent probability measure follows from Theorem \ref{thm:repr} and Proposition \ref{prop:coherent}. 

Conversely, let $p$ be a coherent probability measure.
From Theorem \ref{thm:repr} and Proposition \ref{prop:coherent}, there is a unique $\rho$ such that $\domain_p= (\rho)^\circ$. Assume there is a projector $\Pi_\ell$ such that $p(\Pi_\ell) \neq \text{Tr}(\Pi_\ell \rho )$. Observe that the case $\text{Tr}(\Pi_\ell \rho )<p(\Pi_\ell)$ can be led back to that with the reversed inequality by considering the orthogonal projector $(I_n - \Pi_\ell)$.

Let us therefore consider the case that $\text{Tr}(\Pi_\ell \rho )>p(\Pi_\ell)\geq 0$. Assume that $p(\Pi_\ell)=E_p(\Pi_\ell)=0$, whence $p(I_n - \Pi_\ell)=1$. Fix $\epsilon>0$ and let $G= \lambda_1 \Pi_\ell + \lambda_2(I_n - \Pi_\ell)$, where $$\lambda_1=\epsilon + \frac{1}{\text{Tr}(\Pi_\ell \rho )}\text{ and }\lambda_2= \begin{cases} -1 & \text{ if } \text{Tr}((I_n -\Pi_\ell) \rho )=0,\\
-\frac{1}{\text{Tr}((I_n-\Pi_\ell) \rho )} & \text{ otherwise}.\end{cases}$$
It holds that $G\not\geq 0$ and $E_p(G)<0$. 
This means that $G\notin \domain_p$. On the other hand, $$Tr(G^\dagger \rho)={\lambda_1}{\text{Tr}(\Pi_\ell \rho )} +{\lambda_2 }{\text{Tr}((I_n - \Pi_\ell) \rho )}=\epsilon\text{Tr}(\Pi_\ell \rho )+
\begin{cases} 
1 & 
\text{ if } \text{Tr}((I_n -\Pi_\ell) \rho )=0,\\
0 & \text{ otherwise}.\end{cases}$$
Hence $Tr(G^\dagger \rho)> 0$, meaning that $G \in (\rho)^\circ$, a contradiction.

Finally, let us consider the case that $\text{Tr}(\Pi_\ell \rho )>p(\Pi_\ell)>0$. Notice that $\text{Tr}((I_n - \Pi_\ell) \rho ) < p(I_n - \Pi_\ell)$. In this case too $\text{Tr}((I_n - \Pi_\ell) \rho ) > 0$, otherwise, by reasoning as before, we could find a contradiction.  
Hence, we obtain that $\frac{\text{Tr}(\Pi_\ell \rho )}{p(\Pi_\ell)} > 1$ and $\frac{\text{Tr}((I_n - \Pi_\ell) \rho )}{p(I_n - \Pi_\ell) } < 1$. Let $\lambda = \frac{1}{p(\Pi_\ell)}$,  $\mu=-\frac{1}{p(I_n - \Pi_\ell)}$ and $G= \lambda \Pi_\ell + \mu (I_n - \Pi_\ell)$. Given that $G\not\geq 0$ and that, trivially, $E_p(G)=0$, it holds that $G \notin K_p$. But $Tr(G^\dagger \rho)=\frac{\text{Tr}(\Pi_\ell \rho )}{p(\Pi_\ell)} - \frac{\text{Tr}((I_n - \Pi_\ell) \rho )}{p(I_n - \Pi_\ell) } > 0$, meaning that $G \in (\rho)^\circ$, a contradiction. 
\end{proof}


%
%

\section{Conclusion}



We have shown that a subject who uses dispersion-free probabilities to accept gambles on a quantum experiment can always be made a \emph{sure loser}: she loses utiles no matter the outcome of the experiment. We say that dispersion-free probabilities are incoherent, which means that they are logically inconsistent with the axioms of QM. 
Moreover, using such a betting framework, we have proved that the only logically consistent probabilities are those that agree with Born rule. We find remarkable that these results are obtained by only using logical considerations and very simple arguments. This has been made possible by our recent reformulation of QM as a logic of uncertainty for quantum experiments.

The results obtained in this paper extend the scope of Gleason's theorem \citep{gleason1957measures} to any dimension. We believe that our formulation of the theorem does not suffer from problematic interpretations like those, for instance, of the extension  presented by \citet{busch2003quantum} that employs positive operator valued measurements (POVMs). Indeed, we work directly  with the space of projectors and use  a well-established way from classical approaches to rationality to check whether some given probabilistic assessments are self-consistent.  As future work, we plan to  use this gambling interpretation of QM to investigate the validity of the various  hidden variable models. In particular, we aim at exploring how far one can go in defining hidden variable models while staying rational.

\bibliographystyle{spbasic}
\bibliography{biblio}

\begin{thebibliography}{24}
\providecommand{\natexlab}[1]{#1}
\providecommand{\url}[1]{{#1}}
\providecommand{\urlprefix}{URL }
\expandafter\ifx\csname urlstyle\endcsname\relax
  \providecommand{\doi}[1]{DOI~\discretionary{}{}{}#1}\else
  \providecommand{\doi}{DOI~\discretionary{}{}{}\begingroup
  \urlstyle{rm}\Url}\fi
\providecommand{\eprint}[2][]{\url{#2}}

\bibitem[{Anscombe and Aumann(1963)}]{anscombe1963}
Anscombe FJ, Aumann RJ (1963) A definition of subjective probability. The
  Annals of Mathematical Statistics 34:199--2005

\bibitem[{Barnett et~al(2014)Barnett, Cresser, Jeffers, and
  Pegg}]{barnett2014quantum}
Barnett SM, Cresser JD, Jeffers J, Pegg DT (2014) Quantum probability rule: a
  generalization of the theorems of {G}leason and {B}usch. New Journal of
  Physics 16(4):043,025

\bibitem[{Bell(1966)}]{bell1966problem}
Bell JS (1966) On the problem of hidden variables in quantum mechanics. Reviews
  of Modern Physics 38(3):447

\bibitem[{Benavoli et~al(2016)Benavoli, Facchini, and Zaffalon}]{benavoli2016d}
Benavoli A, Facchini A, Zaffalon M (2016) Quantum mechanics: The {Bayesian}
  theory generalised to the space of {Hermitian} matrices. Accepted to Physical
  Review A, ArXiv e-prints 160508177 \eprint{1605.08177}

\bibitem[{Busch(2003)}]{busch2003quantum}
Busch P (2003) Quantum states and generalized observables: a simple proof of
  {G}leason's theorem. Physical review letters 91(12):120,403

\bibitem[{Caves et~al(2002)Caves, Fuchs, and Schack}]{Caves02}
Caves CM, Fuchs CA, Schack R (2002) Unknown quantum states: the quantum de
  {Finetti} representation. Journal of Mathematical Physics 43(9):4537--4559

\bibitem[{Caves et~al(2004)Caves, Fuchs, Manne, and Renes}]{caves2004gleason}
Caves CM, Fuchs CA, Manne KK, Renes AM (2004) {G}leason-type derivations of the
  quantum probability rule for generalized measurements. Foundations of Physics
  34(2):193--209

\bibitem[{{d}e Finetti(1937)}]{finetti1937}
{d}e Finetti B (1937) La pr\'evision: ses lois logiques, ses sources
  subjectives. Annales de l'Institut Henri Poincar\'e 7:1--68, {E}nglish
  translation in \cite{kyburg1964}

\bibitem[{Fuchs and Schack(2011)}]{FuchsSchackII}
Fuchs CA, Schack R (2011) A quantum-{Bayesian} route to quantum-state space.
  Foundations of Physics 41(3):345--356

\bibitem[{Fuchs and Schack(2013)}]{longPaper}
Fuchs CA, Schack R (2013) Quantum-{Bayesian} coherence. Reviews of Modern
  Physics 85(4):1693

\bibitem[{Galaabaatar and Karni(2013)}]{galaabaatar2013}
Galaabaatar T, Karni E (2013) Subjective expected utility with incomplete
  preferences. Econometrica 81(1):255--284

\bibitem[{Gleason(1957)}]{gleason1957measures}
Gleason AM (1957) Measures on the closed subspaces of a {H}ilbert space.
  Journal of mathematics and mechanics 6(6):885--893

\bibitem[{Heinosaari and Ziman(2011)}]{heinosaari2011mathematical}
Heinosaari T, Ziman M (2011) The mathematical language of quantum theory: from
  uncertainty to entanglement. Cambridge University Press

\bibitem[{Holevo(1973)}]{holevo1973statistical}
Holevo AS (1973) Statistical decision theory for quantum systems. Journal of
  Multivariate Analysis 3(4):337--394

\bibitem[{Holevo(1982)}]{holevo1982probabilistic}
Holevo AS (1982) Probabilistic and statistical aspects of quantum theory.
  North-Holland, Amsterdam

\bibitem[{Kochen and Specker(1967)}]{kochen1967problem}
Kochen S, Specker EP (1967) The problem of hidden variables in quantum
  mechanics. Journal of Mathematics and Mechanics 17:59--87

\bibitem[{Kyburg~Jr. and Smokler(1964)}]{kyburg1964}
Kyburg~Jr HE, Smokler HE (eds)  (1964) Studies in Subjective Probability.
  Wiley, New York, second edition (with new material) 1980

\bibitem[{Mermin(2014)}]{mermin2014physics}
Mermin ND (2014) Physics: Qbism puts the scientist back into science. Nature
  507(7493):421--423

\bibitem[{Pitowsky(2003)}]{pitowsky2003betting}
Pitowsky I (2003) Betting on the outcomes of measurements: a {B}ayesian theory
  of quantum probability. Studies in History and Philosophy of Science Part B:
  Studies in History and Philosophy of Modern Physics 34(3):395--414

\bibitem[{Pitowsky(2006)}]{Pitowsky2006}
Pitowsky I (2006) Physical Theory and its Interpretation: Essays in Honor of
  Jeffrey Bub, Springer Netherlands, Dordrecht, chap Quantum Mechanics as a
  Theory of Probability, pp 213--240

\bibitem[{Walley(1991)}]{walley1991}
Walley P (1991) Statistical Reasoning with Imprecise Probabilities. Chapman and
  Hall, London

\bibitem[{Williams(1975)}]{williams1975}
Williams PM (1975) Notes on conditional previsions. Tech. rep., School of
  Mathematical and Physical Science, University of Sussex, UK, reprinted in
  \cite{williams2007}

\bibitem[{Williams(2007)}]{williams2007}
Williams PM (2007) Notes on conditional previsions. International Journal of
  Approximate Reasoning 44:366--383, revised journal version of
  \cite{williams1975}

\bibitem[{Zaffalon and Miranda(2015)}]{zaffalon2015a}
Zaffalon M, Miranda E (2015) Desirability and the birth of incomplete
  preferences. Tech. Rep. abs/1506.00529, CoRR, downloadable
  at~\url{http://arxiv.org/abs/1506.00529}

\end{thebibliography}

\end{document}